\documentclass[journal, towcolumn, 10pt]{IEEEtran}
\usepackage{graphicx}
\usepackage{algorithm}
\usepackage{algorithmicx}
\usepackage{algpseudocode}
\usepackage{cite}
\usepackage{color}
\usepackage{amsmath,amssymb,amsfonts,amsthm}
\usepackage{bm}
\usepackage{setspace}
\usepackage{multirow}
\usepackage{mathtools}

\newtheorem{thm} {Theorem}

\hyphenation{op-tical net-works semi-conduc-tor}

\pagestyle {empty}
\thispagestyle{empty}

\begin{document}
\title{Node Removal Vulnerability of the Largest Component of a Network}

\author{Pin-Yu~Chen and Alfred O. Hero III,~\emph{Fellow},~\emph{IEEE}
\\ Department of Electrical Engineering and Computer Science, University of Michigan, Ann Arbor, USA
\\Email : \{pinyu,hero\}@umich.edu
\thanks{This work has been partially supported by a Department of EECS Graduate Fellowship to the first author and by the Army Research Office (ARO), grant number W911NF-09-1-0310.}
}

\maketitle
\thispagestyle{empty}
\begin{abstract}
The connectivity structure of a network can be very sensitive to removal of certain nodes in the network. In this paper, we study the sensitivity of the largest component size to node removals.
We prove that minimizing the largest component size is equivalent to solving a matrix one-norm minimization problem whose column vectors are orthogonal and sparse and they form a basis of the null space of the associated graph Laplacian matrix. A greedy node removal algorithm is then proposed based on the matrix one-norm minimization. In comparison with other node centralities such as node degree and betweenness, experimental results on US power grid dataset validate the effectiveness of the proposed approach in terms of reduction of the largest component size with relatively few node removals.
\end{abstract}

\begin{IEEEkeywords}
graph Laplacian, greedy node removal, network robustness, spectral graph theory, topological vulnerability
\end{IEEEkeywords}

\section{Introduction}
\label{sec_Intro}
Networks are vulnerable to selective node removals and even a few such removals can severely disrupt their operation \cite{Albert00}. The sensitivity of the size of the largest component
to node removals is one of the most important topological vulnerability measures in network science \cite{Lewis08}, as it is closed related to the functionality, robustness and fragility \cite{Callaway00,Xiao08,Sole08,CPY12}.
Despite its wide range of interest, little is known on how one might most efficiently disrupt a network given a fixed number of node removals and how to identify the most vulnerable nodes.

A phase transition occurs when the fraction of removed nodes exceeds certain critical value, and the largest component vanishes into several small components. Under uncorrelated random graph assumptions,
Cohen \textit{et. al.} \cite{Cohen01} use degree distributions to evaluate the critical value for this phase transition based on node degree removals.
However, it has been shown in \cite{Alderson05,Moreira09} that the uncorrelated graph assumption is a poor fit to some real world networks.

Another commonly adopted node centrality for studying network connectivity is betweenness centrality \cite{Freeman77}.
The betweenness of a node $v$ is defined as
\begin{align}
\label{eqn_betweenness}
\sigma(v)=\sum_{s \neq v \neq t} \frac{\sigma_{st}(v)}{\sigma_{st}},
\end{align}
where $\sigma_{st}$ is the total number of shortest paths from node $s$ to node $t$
and $\sigma_{st}(v)$ is the number of those paths that pass through $v$.
Roughly speaking, a node is regarded as more important if it is bypassed by more shortest paths in the network.
Holme \textit{et. al.} \cite{Holme02} have shown that greedy node removals can be made more harmful by iteratively removing the node with the highest node degree. They also empirically verify that betweenness based node removal is more effective than degree based node removal in terms of minimizing largest component size. In this paper, we propose a new network robustness measure that is directly related to the largest component size. When used to minimize the largest component size, our proposed measure outperforms the degree and betweenness centrality.

The proposed measure is based on the graph Laplacians \cite{Mohar91}. The graph Laplacian has been widely used to characterize graph connectivity. We establish a link between the graph Laplacian and the size of the largest component. Specifically, we show that minimizing largest component size is equivalent to finding a set of sparse orthogonal vectors that span the null space of the associated graph Laplacian matrix. The equivalence is exact and it imposes no restrictive assumptions, e.g. uncorrelatedness, small world network structure, scale-free degree distribution, etc.
Based on the formulation, a spectral graph cut based greedy node removal procedure is proposed to identify the most vulnerable nodes.

To illustrate our proposed method, we use United States power grid topology.
Comparing with strategies based on node degree and betweenness, our proposed graph Laplacian node removal approach leads to a selection of nodes whose removal significantly increases the rate of reduction of the largest component size. This results in a useful measure of network sensitivity that can be used to asses network vulnerability to node removals.

\section{Properties of Graph Laplacians}
\label{sec_Laplacian}
Consider an unweighted and undirected network containing no self loops or multiple edges, the corresponding network graph can be denoted by a simple graph $G=(V,E)$ with node set $V$ and edge set $E=\{(u,v):u,v \in V\}$, where $|V|=n$ and $|E|=m$ are the number of nodes and edges in the graph, respectively.
The adjacency matrix $A$ of $G$ is a binary symmetric $n$-by-$n$ matrix, where $A_{ij}=1$ if $(i,j)\in E$ and otherwise $A_{ij}=0$. Let $d_i$ denote the number of edges incident to node $i$, the degree matrix $D=\textnormal{diag}(d_1,d_2,\ldots,d_n)$ is a diagonal matrix with its entry $D_{ii}=d_i$.
The graph Laplacian matrix $L$ of $G$ is defined as $L=D-A$, and $L$ can be decomposed by the outer product of an $n$-by-$m$ signed incidence matrix $B$ such that $L=BB^T$. For any $e=(v,w)\in E$, $v<w$, $B_{v,e}=1$ and $B_{w,e}=-1$, otherwise $B_{v,e}=0$. Therefore, $L$ is a symmetric and positive semidefinite (PSD) matrix. Similarly, the signless graph Laplacian matrix $Q$ is defined as $Q=D+A$ \cite{Cvetkovic07}, where $Q$ is PSD, symmetric and its incidence matrix is the signless incidence matrix of $G$.

Let $\lambda_i(L)$ be the $i$-th largest eigenvalue of $L$ and $\textbf{1}$ denote the all one vector. Since $L\textbf{1}=(D-A)\textbf{1}=0$, $\textbf{1}$ is always in the null space of $L$ and the smallest eigenvalue $\lambda_n(L)=0$. Furthermore, the multiplicity of zero eigenvalues is equal to the number of components (including the isolated nodes) of a simple graph \cite{Fiedler73}.
The nuclear norm of $L$ is associated with the number of edges in $G$ as $\|L\|_\ast=\sum_{i=1}^n \lambda_i(L)=2|E|$. Therefore, the highest-degree-first node removal strategy is in fact a greedy nuclear norm minimization heuristic.
The aforementioned properties of graph Laplacians will be useful in analyzing the network robustness to node removals.

\section{Network Robustness Measure Formulation}
\label{sec_measure}
Let $R \subset V$ denote the set of removed nodes from $G$ with $|R|=q$ and $G_R=(V_R,E_R)$ denote the remaining graph after removing the nodes in $R$ from $G$. When a node is removed, the edges attached to the node will be removed as well. The network vulnerability to node removals is evaluated in terms of the remaining largest component size after node removals, where we denote the size and the number of edges of the largest component of $G_R$ by $|V_R^{LC}|$ and $|E_R^{LC}|$, respectively.
In general, the optimal node removal set $R$ that minimizes $|V_R^{LC}|$ is not unique. Therefore, we propose a node removal approach that minimizes $|V_R^{LC}|$ and $|E_R^{LC}|$ simultaneously.
Specifically, we seek an $R=R^*$ that achieves
\begin{align}
\label{eqn_general_attack}
R^*=\min_{R \in F_q} |E^{LC}_R|,
\end{align}
where $F_q=\{R:R=\arg \min_{R^\prime,~|R^\prime|=q} |V^{LC}_{R^\prime}| \}$ is the solution space containing the feasible node removal sets that minimize the largest component size. In general, the computational complexity for solving this problem is of combinatorial order $\binom{n}{q}$. We use graph Laplacians to formulate the network metrics $|V_{R}^{LC}|$ and $|E_{R}^{LC}|$ and propose a greedy node removal approach to reduce computational complexity.

\subsection{Upper bound on the number of edges in the largest component}
For a given set of removed nodes $R$, let $L_R$ denote the graph Laplacian matrix of $G_R$ and $\lambda_i(L_R)$ denote the $i$-th largest eigenvalue of $L_R$. The following theorem gives an upper bound on the number of edges $|E_{R}^{LC}|$ in the largest component.
\begin{thm}
\label{Thm_link}
$|E_R^{LC}|$ is upper bounded by $\frac{1}{8}(n+1) \lambda_1(\widetilde{Q}_R)$, where
$\widetilde{Q}_R=\left[
\begin{smallmatrix}
  Q_R & d  \\
  d^T & 0
\end{smallmatrix} \right]$,
$d=A_R \textnormal{\textbf{1}}$ and $Q_R=D_R+A_R$ is the signless graph Laplacian matrix of $G_R$.
\end{thm}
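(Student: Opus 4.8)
The plan is to bound $\lambda_1(\widetilde{Q}_R)$ from below by evaluating the Rayleigh quotient of $\widetilde{Q}_R$ at a test vector tailored to the largest component, and then to rearrange the resulting inequality. Because $\widetilde{Q}_R$ is symmetric, $\lambda_1(\widetilde{Q}_R)=\max_{\mathbf{x}\neq 0}\frac{\mathbf{x}^T\widetilde{Q}_R\mathbf{x}}{\mathbf{x}^T\mathbf{x}}$, so it suffices to exhibit a single $\mathbf{x}$ for which this quotient is at least $\frac{8|E_R^{LC}|}{n+1}$.

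First I would record the two algebraic identities that make the signless Laplacian tractable. Partition any test vector as $\mathbf{x}=(\mathbf{y}^T,z)^T$, with $\mathbf{y}$ indexed by $V_R$ and $z$ the augmented coordinate. The block form of $\widetilde{Q}_R$ gives $\mathbf{x}^T\widetilde{Q}_R\mathbf{x}=\mathbf{y}^TQ_R\mathbf{y}+2z\,d^T\mathbf{y}$. Writing $Q_R=C_RC_R^T$ for the signless incidence matrix $C_R$ of $G_R$ yields the standard identity $\mathbf{y}^TQ_R\mathbf{y}=\sum_{(i,j)\in E_R}(y_i+y_j)^2$, and since $d=A_R\textbf{1}$ is exactly the degree vector of $G_R$, one has $d^T\mathbf{y}=\sum_i d_i y_i$. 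These two identities are all the structure I need.

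The decisive step is the choice of test vector: let $\mathbf{y}$ be the indicator vector of the node set $V_R^{LC}$ of the largest component and set the augmented coordinate $z=1$. Since the largest component is a maximal connected piece of $G_R$, no edge leaves it, so only its internal edges contribute to either quadratic form. Each such edge has $y_i=y_j=1$, giving $\mathbf{y}^TQ_R\mathbf{y}=4|E_R^{LC}|$, while $d^T\mathbf{y}=\sum_{i\in V_R^{LC}}d_i=2|E_R^{LC}|$ by the handshake identity applied to the (boundary-free) component. Substituting gives $\mathbf{x}^T\widetilde{Q}_R\mathbf{x}=4|E_R^{LC}|+2\cdot 2|E_R^{LC}|=8|E_R^{LC}|$ and $\mathbf{x}^T\mathbf{x}=|V_R^{LC}|+1$, so $\lambda_1(\widetilde{Q}_R)\geq\frac{8|E_R^{LC}|}{|V_R^{LC}|+1}$. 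Bounding $|V_R^{LC}|\leq n$ and solving for $|E_R^{LC}|$ then produces the claimed inequality.

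I expect the main obstacle to be conceptual rather than computational: recognizing that the extra row and column built from $d$ are exactly what promote the \emph{four per edge} of $\mathbf{y}^TQ_R\mathbf{y}$ to the \emph{eight per edge} needed for the constant $\tfrac18$, and that restricting the indicator to the largest component is what isolates $|E_R^{LC}|$ without contamination from the other components. Once this test vector is in hand the algebra is immediate; the only points worth checking carefully are the degree-sum identity $\sum_{i\in V_R^{LC}}d_i=2|E_R^{LC}|$, which relies on the maximality (hence boundary-freeness) of the component, and the single inequality $|V_R^{LC}|\leq n$, which is the only place the ambient size $n$ enters.
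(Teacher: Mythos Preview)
Your argument is correct and follows the same strategy as the paper: choose a test vector encoding membership in the largest component, evaluate the quadratic form $\mathbf{x}^T\widetilde{Q}_R\mathbf{x}$, and invoke the Rayleigh quotient bound. The only difference is the test vector itself: the paper takes the $\pm 1$ vector $s$ with $s_i=1$ on $V_R^{LC}$ and $s_i=-1$ elsewhere (augmented by a $1$), obtaining the exact identity $|E_R^{LC}|=\tfrac{1}{8}{s'}^T\widetilde{Q}_R s'$ with $\|s'\|^2=n+1$, so the bound follows in one step; you instead take the $0/1$ indicator, which yields the same numerator $8|E_R^{LC}|$ but a smaller denominator $|V_R^{LC}|+1$, giving the sharper intermediate inequality $|E_R^{LC}|\le \tfrac{1}{8}(|V_R^{LC}|+1)\lambda_1(\widetilde{Q}_R)$ before you relax $|V_R^{LC}|\le n$. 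That last relaxation tacitly uses $\lambda_1(\widetilde{Q}_R)\ge 0$, which is immediate from your own computation since the Rayleigh quotient you exhibit is nonnegative.
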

\begin{proof}
Let $A_R$ be the adjacency matrix of $G_R$ and $s$ be an $n \times 1$ identification vector such that $s_i=1$ if $i \in V^{LC}_R$, otherwise $s_i=-1$. We have
\begin{align}
|E_R^{LC}|&=\frac{1}{2} \sum_{i,j \in V^{LC}_R}[A_R]_{ij}
=\frac{1}{8} \sum_{i,j}[A_R]_{ij}(1+s_i)(1+s_j) \nonumber
\end{align}
\begin{align}
&=\frac{1}{8} \sum_{i,j}[A_R]_{ij}+\frac{1}{4} \sum_{i,j}[A_R]_{ij}s_i
+\frac{1}{8} \sum_{i,j}[A_R]_{ij}s_is_j \nonumber \\ 
&=\frac{1}{8} \sum_{i}d_{i} s_i^2+\frac{1}{4} \sum_{i}d_{i}s_i+\frac{1}{8} \sum_{i,j}[A_R]_{ij}s_is_j
\nonumber\\
&=\frac{1}{8}\left[s^T(D_R+A_R)s+2d^Ts\right] = \frac{1}{8}\left[s^T Q_R s+2d^Ts\right]\nonumber\\
&= \frac{1}{8} {s^\prime}^T \widetilde{Q}_R s^\prime \leq \frac{1}{8}(n+1) \lambda_1(\widetilde{Q}_R)
\end{align}
by the Rayleigh quotient theorem \cite{HornMatrixAnalysis}, where $s^\prime=[s~1]^T$. The upper bound is attained if $s^\prime / \sqrt{n+1}$ is an eigenvector of $\lambda_1(\widetilde{Q}_R)$.
\end{proof}

\subsection{Largest component size}
\label{subsec_LC}
Let $\textnormal{null}(L)$ denote the null space of $L$ and define the sparsity of a vector to be the number of zero entries in the vector.
Next we express the largest component size via graph Laplacians.
\begin{thm}
\label{Thm_largest_component_size}
$|V_R^{LC}|=\| X\|_1=\max_i \|x_i\|_1$, where $x_i$ is the $i$-th column vector of binary matrix $X$. The columns of $X$ are orthogonal and they form the sparsest basis of \textnormal{null}$(L_R)$ among binary vectors.
\end{thm}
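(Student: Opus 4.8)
The plan is to realize $\textnormal{null}(L_R)$ explicitly through the connected components of $G_R$ and then read off both the largest component size and the sparsity claim from this description. Let $C_1,\dots,C_k$ be the connected components of $G_R$ (isolated nodes included), and for each $i$ let $x_i$ be the indicator vector of $C_i$, i.e.\ $(x_i)_v=1$ if $v\in C_i$ and $0$ otherwise. First I would invoke the facts already recorded in Section~\ref{sec_Laplacian}: that $\mathbf{1}$ lies in the null space of the Laplacian of any graph, and that the multiplicity of the zero eigenvalue equals the number of components, so $\dim\,\textnormal{null}(L_R)=k$. After relabeling the nodes component-by-component, $L_R$ is block diagonal with one block per component, each block being the Laplacian of the corresponding induced subgraph; since $L\mathbf{1}=0$ for every block, each $x_i$ satisfies $L_R x_i=0$ and hence $x_i\in\textnormal{null}(L_R)$.

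Next I would verify the stated structural properties. Because distinct components have disjoint vertex sets, the supports of $x_1,\dots,x_k$ are disjoint, so $x_i^T x_j=0$ for $i\neq j$; orthogonality gives linear independence, and since there are exactly $k$ of them they form a basis of $\textnormal{null}(L_R)$. Collecting them as the columns of the binary matrix $X=[x_1,\dots,x_k]$, each column is a $0/1$ vector with $\|x_i\|_1=|C_i|$, the number of nodes in $C_i$. Interpreting $\|X\|_1$ as the induced matrix one-norm (maximum absolute column sum) then yields $\|X\|_1=\max_i\|x_i\|_1=\max_i|C_i|=|V_R^{LC}|$, which is exactly the largest component size.

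The remaining, and main, task is the minimality claim: that $X$ is the sparsest binary basis, where sparsity counts zero entries. I would first characterize all binary null vectors: since every vector in $\textnormal{null}(L_R)$ is constant on each component, a $0/1$ null vector must be the indicator of a union of components, i.e.\ it is determined by a membership vector in $\{0,1\}^k$. Any binary basis therefore corresponds to a $k\times k$ matrix $M$ over $\{0,1\}$ whose rows are these membership vectors, and $M$ must be invertible over $\mathbb{R}$. The total number of nonzero entries of the basis equals $\sum_i c_i|C_i|$, where $c_i$ is the number of basis vectors that include component $C_i$ (the $i$-th column sum of $M$). Invertibility forces $c_i\geq 1$ for every $i$ (if some $c_i=0$, the membership vectors all lie in a coordinate hyperplane of dimension $k-1$ and cannot be independent), so $\sum_i c_i|C_i|\geq\sum_i|C_i|=|V_R|$, with equality precisely when every column of $M$ has a single one---that is, when $M$ is a permutation matrix, which recovers the single-component indicators up to reordering. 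Thus $X$ simultaneously minimizes the number of nonzeros and maximizes the number of zeros, establishing that it is the sparsest binary basis. The crux is this final combinatorial step; the spectral identification of the null space is standard, but translating ``sparsest basis among binary vectors'' into an invertibility-constrained minimization over $\{0,1\}$ matrices, and pinning the optimum to permutation matrices, is where care is needed.
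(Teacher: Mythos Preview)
Your argument is correct and follows the same construction as the paper: block-diagonalize $L_R$ along components, take the component indicator vectors as columns of $X$, check orthogonality and $\|x_i\|_1=|C_i|$, and conclude $\|X\|_1=|V_R^{LC}|$. Where you diverge is in the justification of the sparsest-basis claim: the paper dispatches this in one sentence (``if there existed another binary matrix that were sparser than $X$, then it contradicts the fact that its column vectors characterize the component sizes of $G_R$''), whereas you actually prove it by parameterizing all binary null vectors as unions of components, encoding a candidate binary basis by an invertible $\{0,1\}$ matrix $M$, and showing the nonzero count $\sum_i c_i|C_i|$ is minimized exactly when $M$ is a permutation matrix. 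Your treatment is more complete on this point and buys an honest proof of minimality; the paper's version is essentially an assertion.
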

\begin{proof}
Let $r$ be the rank of $L_R$. We will prove that there exists an $n \times (n-r)$ binary matrix $X=[x_1~x_2 \ldots x_{n-r}]$ whose columns $\{x_i\}_{i=1}^{n-r}$ satisfy: 1) $\|x_i\|_1$ is the size of the $i$-th component of $G_R$; 2) they are orthogonal.
Assume $G_R$ consists of $K$ components. There exits a matrix permutation (relabeling) such that
\begin{align}
L_R=
\left[
  \begin{matrix}
       L_R^1 & 0     & 0      & 0  \\
       0     & L_R^2 & 0      & 0 \\
       0     & 0     & \ddots & 0   \\
       0     & 0     & 0      & L_R^K 
  \end{matrix}
\right].
\end{align}
Associated with the $i$-th block matrix $L_R^i$ we define $x_i$ as an $n \times 1$ binary vector $x_i$ in \textnormal{null}$(L_R)$ having the form
$x_i=[0 \ldots 0~1 \ldots 1~0 \ldots 0]^T$, where the locations of the nonzero entries correspond to the indexes of the $i$-th block matrix. It is obvious that $\| x_i \|_1=\sum_{j=1}^{n}|x_{ij}|$ equals the size of the $i$-th component and the $\{x_i\}_{i=1}^{n-r}$ are mutually orthogonal. Furthermore, there exists no other binary matrix which is sparser than $X$ with column span equal to \textnormal{null}$(L_R)$. If there existed another binary matrix that were sparser than $X$, then it contradicts the fact the its column vectors characterize the component sizes of $G_R$. Therefore the largest component size of $G_R$ is $|V_R^{LC}|=\| X\|_1=\max_i \|x_i\|_1$.
\end{proof}

\subsection{Greedy basis search algorithm for constructing $X$}
It has been proven in \textbf{Theorem 2.1} of \cite{Coleman86} that a matrix $X$ is a sparsest basis for a finite dimensional linear subspace if and only if it can be constructed by greedy basis search. This result will allow us to solve for the solution $X$ in \textbf{Theorem \ref{Thm_largest_component_size}} of Sec. \ref{subsec_LC} in polynomial time via \textbf{Algorithm \ref{algo_basis}} due to sparsity and mutual orthogonality of columns in $X$.

\begin{algorithm}
\caption{Sparsest basis search algorithm for \textnormal{null}$(L_R)$}
\label{algo_basis}
\begin{algorithmic}[1]
  \State Obtain a linearly independent basis $Y$ for \textnormal{null}$(L_R)$.
  \State Compute the number of nonzero and distinct nonzero entries for each column vector $y_i$ in $Y$.
  \State Select the sparsest column vector of $Y$. If there are more than one such vectors then choose the vector with the most distinct entries.
  \State Decompose the chosen vector according to its nonzero distinct entries. For each distinct entry, let $e$ be the binary vector such that its nonzero element is at the same location as the chosen entry. If $e$ is orthogonal to the column vectors in $X$, then add $e$ to $X$.
  \State Repeat step $3)$ and $4)$ until \textnormal{rank}$(X)$ $=$ \textnormal{rank}$(Y)$.
\end{algorithmic}
\end{algorithm}

Note that singular value decomposition (SVD) or QR decomposition methods can be used to find a matrix $Y$ whose column vectors are a basis of \textnormal{null}$(L_R)$.
Since each column vector of $y$ can be represented as the linear combination of the column vectors of $X$ and there is exactly one nonzero entry in each row of $X$, the number of distinct entries of $y_i$ is the number of active column vectors in $X$ that contribute to $y_i$. In addition, due to sparsity and mutual orthogonality of columns in $X$, the greedy basis search can be employed by selecting the sparsest column vector from $Y$ and decompose the vector into several binary vectors and verify the mutual orthogonality property. The criterion in \cite{Coleman86} guarantees that this basis search approach terminates in a finite number of steps since the $\{x_i\}_{i=1}^{n-r}$ are of finite dimension and the result leads to the matrix $X$ in \textbf{Theorem \ref{Thm_largest_component_size}}.

For illustration, consider a network with four nodes, where there is only one edge between node $1$ and node $2$. The graph Laplacian matrix is
$L=\left[
  \begin{smallmatrix*}[c]
    1 &-1 &0 &0 \\
    -1& 1 &0 &0 \\
    0 & 0 &0 &0 \\
    0 & 0 &0 &0
  \end{smallmatrix*} \right]$.
The matrix of our interest is
$X=\left[
  \begin{smallmatrix}
    1  &0 &0 \\
    1  &0 &0 \\
    0  &1 &0 \\
    0  &0 &1
  \end{smallmatrix} \right]$,
and the matrix we obtain from SVD is
$Y=\left[
  \begin{smallmatrix*}[c]
    0.5 &0.5  &0  \\
    0.5& 0.5 &0  \\
    0.5 & -0.5 &1 / \sqrt{2} \\
    0.5 & -0.5 &-1 / \sqrt{2}
  \end{smallmatrix*} \right]$.
Following the aforementioned procedures for reconstructing $X$ from $Y$, the number of nonzero entries for $y_i$ is $4$, $4$ and $2$, and the number of distinct nonzero entries is $1$, $2$ and $2$, respectively. Therefore we start from $y_3$ and decompose it into two vectors $[0~0~1~0]^T$ and $[0~0~0~1]^T$. We add these two vectors to $X$ since they are orthogonal to each other. Then we decompose $y_2$ into $[0~0~1~1]^T$ and $[1~1~0~0]^T$. Since $[0~0~1~1]^T$ is not orthogonal to the vectors in $X$, we discard this vector. Finally, the vector $[1~1~0~0]^T$ is added into $X$ by ckecking the orthogonality property and we obtain the matrix $X$ of interest.

To sum up, with the aid of \textbf{Theorem \ref{Thm_link}} and \textbf{Theorem \ref{Thm_largest_component_size}}, the node removal problem in (\ref{eqn_general_attack}) can be reformulated as
\begin{align}
\label{eqn_attack_Laplacian}
R^*=\min_{R \in F_q} \lambda_1(\widetilde{Q}_R),
\end{align}
where
\begin{align}
\label{eqn_Fq}
F_q=\{X: L_R X=\underline{0},~|R|=q,~X=\arg \min_{X^\prime}\|X^\prime\|_1 \}.
\end{align}
In other words, finding the most disruptive node removal set when removing $q$ nodes from the network is equivalent to solving the minimum matrix one-norm problem on $X$ and then minimizing the largest eigenvalue of $\widetilde{Q}_R$.

\section{Greedy Node Removal Algorithm}
\label{sec_greedy}
It remains to specify a node removal strategy that achieves the minimum in (\ref{eqn_attack_Laplacian}). We propose a node removal strategy to reduce computational complexity, i.e., a greedy node removal algorithm based on spectral graph cut to successively remove the most vulnerable single node. In other words, we recursively solve the $q=1$ version of (\ref{eqn_attack_Laplacian}) until the desired number of nodes have been removed.

\begin{algorithm}
\caption{Greedy node removal based on spectral cut}
\label{algo_threshold}
\begin{algorithmic}[1]
\State \textbf{Input:} $G$ and $|R|=q$
\State \textbf{Output:} $R$
\State $R=\emptyset$
\For{$i=1$ to $q$}
    \State Compute $\widehat{s}$ and $V^{cut}$ in the largest component.
    \State Solve $F_q=\{v^*:~v^*=\arg_{v \in V^{cut}} \|X\|_1$\}
        \If{($|F_q|=1$)}
            \State Set $R= R \cup v^*$
        \Else
            \State $u^*=\arg \min_{u \in F_q} \lambda_1(\widetilde{Q}_R)$
            \State Set $R= R \cup u^*$
        \EndIf
\EndFor
\end{algorithmic}
\end{algorithm}

The spectral cut is associated with the second smallest eigenvector of $L$ (also known as the Fiedler vector \cite{Fiedler73}). For a connected network $G$, let $s$ be the identification vector such that $s_i=1$ if $i$-th node is in group $1$ and $s_i=-1$ if $i$-th node is in group $2$.
The cut size is the number of edges between these two groups, where
\begin{align}
\label{eqn_cut_size}
\textnormal{cut~size}=\frac{1}{4}\sum_{i,j}A_{ij}(1-s_is_j)=\frac{1}{4}s^T L s.
\end{align}
By relaxing $s$ to be real valued and using the fact that $L\textbf{1}=0$, the graph partition problem is equivalent to finding an eigenvector of $L$ that is orthogonal to $\mathbf 1$ such that $s^T L s$ is minimized \cite{Luxburg07}. This is an easily computable approximation to the NP-hard graph partitioning problem. We have
\begin{align}
\label{eqn_Fieder}
s^*=\arg \min_{s \bot \textbf{1}} s^T L s,
\end{align}
where $s^*$ is an eigenvector of the second smallest eigenvalue of $L$.
The partitioning vector is $\widehat{s}=\textnormal{sgn}(s^*)$, where $\textnormal{sgn}(s_i)=1$ if $s_i>0$ and otherwise $\textnormal{sgn}(s_i)=-1$. Define the spectral cut as the set $\{(i,j)\in E:~\widehat{s}_i \widehat{s}_j=-1,~A_{ij}=1\}$, and denote $V^{cut}$ the set of nodes incident to the spectral cut, to be the set of candidate nodes for removal. The optimization in (\ref{eqn_attack_Laplacian}), with $q=1$, is then restricted to the set of nodes in $V^{cut}$, which is a much smaller set than the entire set $V$ of nodes in the graph. This spectral cut and minimization of (\ref{eqn_attack_Laplacian}) process is repeated $q$ times resulting in a significant reduction in computational complexity.   \textbf{Algorithm \ref{algo_threshold}} summarizes the greedy node removal procedure.

\section{Performance Evaluation}
\label{sec_performance}

 An empirical dataset collected from the western states' power subgrid in the United States \cite{Watts98} is used to evaluate the proposed greedy spectral method of node removal. In this network, the nodes represent power stations and the edges represent transmission lines or transformers. The spectral cut based greedy algorithm is compared with node degree and betweenness based greedy node removals proposed in \cite{Holme02}.

\begin{figure}[t]
    \centering
    \includegraphics[width=2.8in]{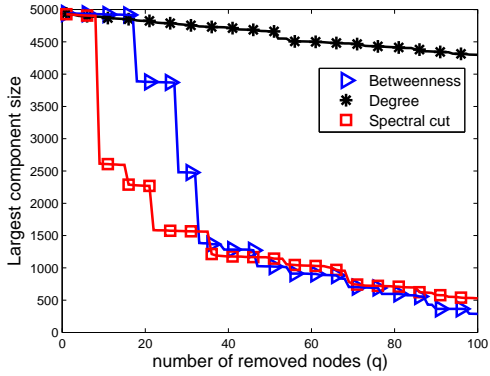}
    \caption{Performance evaluation of greedy node removals based on different node centralities in western states power grid of the United States. This network contains 4941 nodes and 6594 edges. The proposed greedy spectral cut method better reduces the largest component size in the network than do methods based on minimizing degree or betweenness centrality.}
    \label{Fig_USgrid_full}
\end{figure}
\begin{figure}[t]
    \centering
    \includegraphics[width=2.8in]{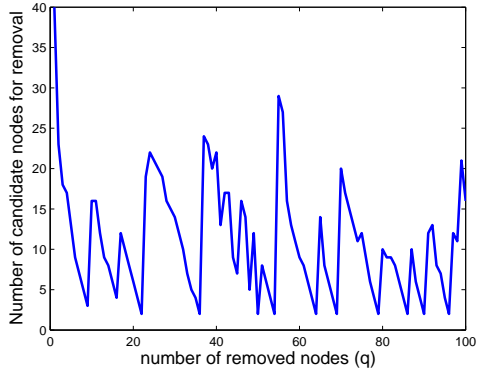}
    \caption{The number of candidate nodes for removal $V^{cut}$ determined by our greedy spectral cut method. In the first five iterations of our greedy algorithm, the average cardinality of $V^{cut}$ is less than 1\% of the total number of nodes (4941) in the network. This proportion decreases to less than 5\% of the nodes as $q$ increases.}
    \label{Fig_candidate_set}
\end{figure}

Fig. \ref{Fig_USgrid_full} displays the node removal performances in US power grid, and
it is quite surprising that the largest component size drastically reduces to half of its original size by simply removing 10 nodes from the network using the proposed spectral cut method, whereas we need to remove 30 nodes to achieve the same performance if one uses betweenness measure.
Degree based node removal is not effective in reducing the largest component size, which suggests that although removing nodes with the highest degree seems to be quite intuitive, the high-degree nodes do not necessarily play a key role in topological vulnerability.

In addition, despite the fact that betweenness is a widely adopted measure for evaluating node centrality, it can not fully identify the most vulnerable nodes whose removal maximally reduces the largest component size.
The number of candidate nodes for removal in each iteration are depicted in Fig. \ref{Fig_candidate_set}.
Observe that the number of candidate removal nodes is much smaller than the network size, which makes the proposed greedy node removal strategy effective for large-scale networks and facilitates the assessment of network vulnerability.

\section{conclusion}
\label{sec_con}
Using spectral theory and graph Laplacians, we derive an upper bound on the number of edges in the largest component and we prove that the largest component size minimization problem is equivalent to finding a set of the sparsest orthogonal basis for the null space of the associated graph Laplacian matrix. This basis can be easily constructed using a greedy basis search algorithm with polynomial computational complexity. Experiments on the US power grid dataset show that the proposed greedy node removal algorithm outperforms other approaches based on node degree and betweenness. Our proposed procedure is scalable to large networks and can be used to reveal the vulnerability of modern networks. The method can naturally be applied to exploring the vulnerability of other networks such as biological networks, social networks, and communication networks.

\bibliographystyle{IEEEtran}
\bibliography{IEEEabrv,Laplacian}

\end{document}